\newtheorem{prop}{Proposition}
\begin{document}

\title{Dual Metric Learning for Effective and Efficient Cross-Domain Recommendations}

\author{Pan~Li, and~Alexander~Tuzhilin% <-this % stops a space
\IEEEcompsocitemizethanks{\IEEEcompsocthanksitem Pan Li and Alexander Tuzhilin are with the Department of Technology, Operation and Statistics, Stern School of Business, New York University, New York, NY, 10012.\protect\\
E-mail: pli2@stern.nyu.edu}}% <-this % stops an unwanted space
%\thanks{Manuscript received April 19, 2005; revised August 26, 2015.}}

% The paper headers
\markboth{Journal of \LaTeX\ Class Files,~Vol.~14, No.~8, August~2020}%
{Shell \MakeLowercase{\textit{et al.}}: Bare Demo of IEEEtran.cls for Computer Society Journals}

\IEEEtitleabstractindextext{%
\begin{abstract}
Cross domain recommender systems have been increasingly valuable for helping consumers identify useful items in different applications. However, existing cross-domain models typically require large number of overlap users, which can be difficult to obtain in some applications. In addition, they did not consider the duality structure of cross-domain recommendation tasks, thus failing to take into account bidirectional latent relations between users and items and achieve optimal recommendation performance. To address these issues, in this paper we propose a novel cross-domain recommendation model based on dual learning that transfers information between two related domains in an iterative manner until the learning process stabilizes. We develop a novel latent orthogonal mapping to extract user preferences over multiple domains while preserving relations between users across different latent spaces. Furthermore, we combine the dual learning method with the metric learning approach, which allows us to significantly reduce the required common user overlap across the two domains and leads to even better cross-domain recommendation performance. We test the proposed model on two large-scale industrial datasets and six domain pairs,  demonstrating that it consistently and significantly outperforms all the state-of-the-art baselines.  We also show that the proposed model works well with very few overlap users to obtain satisfying recommendation performance comparable to the state-of-the-art baselines that use many overlap users.
\end{abstract}

\begin{IEEEkeywords}
Cross Domain Recommendation, Dual Learning, Metric Learning, Orthogonal Mapping
\end{IEEEkeywords}}

\maketitle
\IEEEdisplaynontitleabstractindextext
\IEEEpeerreviewmaketitle

\IEEEraisesectionheading{\section{Introduction}\label{sec:introduction}}

\IEEEPARstart{C}{ross} domain recommendation \cite{cantador2015cross,fernandez2012cross} constitutes an important method to tackle sparsity and cold-start problems \cite{schein2002methods,adomavicius2005toward}, thus becoming the key component for online marketplaces to achieve great success in understanding user preferences. Researchers have proposed various mechanisms to provide cross-domain recommendations through factorization approaches \cite{singh2008relational,li2009can,hu2013personalized,loni2014cross,fernandez2019addressing,wang2019solving} and transfer learning methods \cite{pan2010survey,hu2018conet,lian2017cccfnet,li2020ddtcdr,gao2019cross} that \textit{transfer} user preferences from the source domain to the target domain \cite{pan2010survey}. For instance, if a user watches a movie, we may recommend the novel on which that movie is based to that user \cite{li2009can}. 

However, most of the existing methods only focus on the unidirectional learning process for providing cross-domain recommendations, i.e., utilizing information from the source domain to improve the recommendations in the target domain. This is unfortunate since the duality nature of the cross-domain recommendation task is not explored, and it would also be beneficial to leverage user preference information from the target domain to predict user behaviors in the source domain. For example, once we know the type of books that the user would like to read, we can recommend movies on related topics to form a loop for better recommendations in both domains. 

Dual learning models \cite{qin2020dual}, meanwhile, have been shown in the previous research to be effective and efficient in multiple applications to address duality nature in the learning tasks, including domain adaption \cite{zhong2009cross}, transfer learning \cite{long2012dual,wang2011cross} and machine translation \cite{he2016dual}. Therefore in this paper, we proposed to apply the dual learning mechanism to the problem of cross-domain recommendations, with the assumption that improving the learning process in one domain would also help the learning process in the other domain. We address this assumption by proposing a unifying mechanism that extracts the essence of preference information in each domain and bidirectionally transfer user preferences between different domains to improve the recommendation performance across different domains simultaneously.

Note that existing cross-domain recommendation models typically require a large amount of overlap users between different domains as 'pivots', in order to learn the relations of user preferences and produce satisfying recommendation performance \cite{fernandez2012cross,cremonesi2014cross}. These overlap users have consumed items in both categories, such as watching a movie and reading a book. However, collecting sufficient amount of overlap users could be difficult and expensive to accomplish in many applications. For example, there might be only a limited number of users who purchased books and digital music altogether on Amazon. Therefore, it is important to overcome this problem and minimize the required common user overlap across the two domains in cross-domain recommendations.

To address this issue, one potential solution is to construct those 'pivots' for cross-domain recommendations based on not only the overlap users, but also those users with similar preferences, with the assumption that if two users have similar preferences in a certain domain, their preferences would also be similar across other domains. While the information of user pairs with similar preferences are not explicitly recorded in the dataset, metric learning models \cite{yang2006distance} are capable of learning the distance metric between users and items in the \textit{latent} space through shared representations, and capturing user pairs with similar preferences accordingly \cite{hsieh2017collaborative}. In particular, the metric learning model would pull latent embeddings of overlap user pairs close to each other in the latent space, and also pull those similar user pairs close to each other due to triangle inequality of the distance measure \cite{kulis2012metric}. Besides that, as the metric learning model is capable of learning a joint latent metric space to encode not only users’ preferences but also the user-user and item-item similarity information, it also significantly contributes to the improvements of recommendation performance \cite{hsieh2017collaborative}.

Therefore, in this paper we also propose to supplement the aforementioned dual learning mechanism with the metric learning approach to effectively identify the latent relations in user preferences across different domains using the minimal amount of overlapping users. Moreover, we empirically demonstrate that, by iteratively updating the dual metric learning model, we simultaneously improve recommendation performance over both domains and outperform all the state-of-the-art baseline models. And crucially, the proposed Dual Metric Learning (DML) model provides these strong recommendation performance results with only few overlapping uses.

In this paper, we make the following contributions. We
\begin{itemize}
\item propose to apply the dual learning mechanism for providing cross-domain recommendations to address the duality of learning tasks
\item propose to incorporate the metric learning model into the dual learning mechanism to reduce the requirement of having large amounts of overlapping users
\item present a novel cross-domain recommendation model DML and empirically demonstrate that it significantly and consistently outperforms the state-of-the-art approaches across all the experimental settings
\item provide theoretical foundations to (a) explain why DML reduces the need for the user overlap information; (b) demonstrate the convergence condition for the simplified case of our model; and (c) illustrate that the proposed model can be easily extended to multiple-domain recommendation applications
\item conduct extensive ablation experiments to study (a) the cross-domain recommendation performance of different approaches; (b) the impact of different architectures of neural networks; (c) the influence of including and excluding explicit feature information during the representation learning stage; (d) the impact of different numbers of overlap users included in the model; (e) the sensitivity of hyperparameters, particularly the dimension of latent embeddings; (f) the scalability of the proposed approach; and (g) the convergence behavior of the proposed approach.
\end{itemize}

\section{Related Work}
The work related to the proposed model comes from the following research areas that we review in this section: cross domain recommendations, dual learning, metric-learning for recommendations and deep-learning based recommendations. 

\subsection{Cross Domain Recommendations}
Cross-domain recommendation approach \cite{fernandez2012cross} constitutes a powerful tool to deal with the data sparsity problem. Typical cross domain recommendation models are extended from single-domain recommendation models, including CMF \cite{singh2008relational}, CDCF \cite{li2009can,hu2013personalized}, CDFM \cite{loni2014cross}, HISF \cite{do2019unveiling}, NATR \cite{gao2019cross}, DTCDR \cite{zhu2019dtcdr}, Canonical Correlation Analysis \cite{sahebi2017cross}, Dual Regularization \cite{wu2018dual}, Preference Propagation \cite{zhao2019cross} and Deep Domain Adaption \cite{kanagawa2019cross}.  These approaches assume that different patterns characterize the way that users interact with items of a certain domain and allow interaction information from one domain to inform recommendation in the other domain. 

The idea of information fusion also motivates the use of transfer learning methods \cite{hu2018conet,lian2017cccfnet,gao2019cross,fu2019deeply,hu2019transfer} that transfer extracted information from the source domain to the target domain. The existing transfer learning methods for cross-domain recommendation include Collaborative DualPLSA \cite{zhuang2010collaborative}, JDA \cite{long2013transfer} and RB-JTF \cite{song2017based}.

However, these models do not fundamentally address the relationship between different domains, for they do not improve recommendation performance of both domains simultaneously, thus might not release the full potential of utilizing the cross-domain user interaction information. Also, they do not explicitly model user and item features during recommendation process, and usually require large amount of overlap users. In this paper, we propose a novel dual  metric learning mechanism combining with autoencoders to overcome these issues and significantly improve recommendation performance.

\subsection{Dual Learning}
Transfer learning \cite{pan2010survey} deals with the situation where the data obtained from different resources are distributed differently. It assumes the existence of common knowledge structure that defines the domain relatedness, and incorporate this structure in the learning process by discovering a shared latent feature space in which the data distributions across domains are close to each other. The existing transfer learning methods for cross-domain recommendation include Collaborative DualPLSA \cite{zhuang2010collaborative}, Joint Subspace Nonnegative Matrix Factorization \cite{liu2013multi}  and JDA \cite{long2013transfer} that learn the latent factors and associations spanning a shared subspace where the marginal distributions across domains are close. In recent years, researchers have utilized up-to-date deep learning techniques to design the transfer learning models for providing cross-domain recommendations, including CATN \cite{zhao2020catn} and MMT-Net \cite{krishnan2020transfer}, and they have all achieved satisfying recommendation performance.

In addition, to exploit the duality between these two distributions and to enhance the transfer capability, researchers propose the dual transfer learning mechanism \cite{long2012dual,zhong2009cross,wang2011cross} that simultaneously learns the marginal and conditional distributions. Recently, researchers manage to achieve great performance on machine translation with dual-learning mechanism \cite{he2016dual,xia2017dual,wang2018dual}. All these successful applications address the importance of exploiting the duality for mutual reinforcement. However, none of them apply the dual transfer learning mechanism into cross-domain recommendation problems, where the duality lies in the symmetrical correlation between source domain and target domain user preferences. In this paper, we utilize a novel dual-learning mechanism and significantly improve recommendation performance.

\subsection{Metric Learning for Recommendations}
Metric learning \cite{yang2006distance} learns a distance metric that preserves the distance relation among the training data and assigns shorter distances to semantically similar pairs. Classical metric learning approach attempts to learn a Mahalanobis distance metric
\begin{equation}
d_{A}(x_{i}, x_{j}) = \sqrt{(x_{i}-x_{j})^{T}A(x_{i}-x_{j})}
\end{equation}
where $A\in R^{m\times m}$ is a positive semi-definite matrix \cite{xing2003distance} that projects each input $x$ to the Euclidean space $R^{m}$. The above global optimization pulls all similar pairs together, and pushes dissimilar pairs apart. 

There are many different ways to learn the distance metric for providing recommendations, including convex optimization, kernel methods and neural networks \cite{kedem2012non,wang2011metric,khoshneshin2010collaborative,chen2012playlist,feng2015personalized}. Recently, researchers proposed Collaborative Metric Learning (CML) model \cite{hsieh2017collaborative} that projects users and items into the shared latent space. Therefore, it is capable of modeling the intensity and the heterogeneity of the user–item relationships in the implicit feedback. Researchers in \cite{tay2018latent,park2018collaborative,vinh2020hyperml} further improve the framework of CML and obtain better recommendation performance.

However, none of the existing metric learning works focus on the application of cross-domain recommendations to reduce the requirement of overlap user information. In this paper, we use metric learning to map the users into the shared latent space across different domains. By effectively utilizing information of overlap users, we achieve satisfying cross-domain recommendation performance using only very few users, thus significantly improving the practicality of the cross-domain recommendation models.

\subsection{Deep Learning-based Recommendations}
Recently, deep learning has been revolutionizing the recommendation architectures dramatically and brings more opportunities to improve the performance of existing recommender systems. To capture the latent relationship between users and items, researchers propose to use deep learning based recommender systems \cite{zhang2019deep,wang2015collaborative}, especially embedding and autoencoding methods \cite{he2017neural,sedhain2015autorec,wu2016collaborative,li2017collaborative,liang2018variational,li2019latent,9222083,li2020purs} to extract the latent essence of user-item interactions for the better understanding of user preferences.

However, user preferences in different domains are learned separately without exploiting the duality for mutual reinforcement, for researchers have not addressed the combination of deep learning methods with dual learning mechanism in recommender systems, which learns user preferences from different domains simultaneously and further improves the recommendation performance. To this end, we propose a deep dual metric learning model that captures latent interactions across different domains. The effectiveness of dual learning over the existing methods is empirically demonstrated by extensive experiments.

\section{Model}
In this section, we introduce the proposed Dual Metric Learning (DML) model for improving cross-domain recommendation performance and reducing the need of overlap information. We illustrate the proposed model in Figure \ref{model} and present the framework in Algorithm \ref{dml}. We will now describe the details of DML model in the following sections.

\begin{figure*}
\centering
\includegraphics[width=0.75\textwidth]{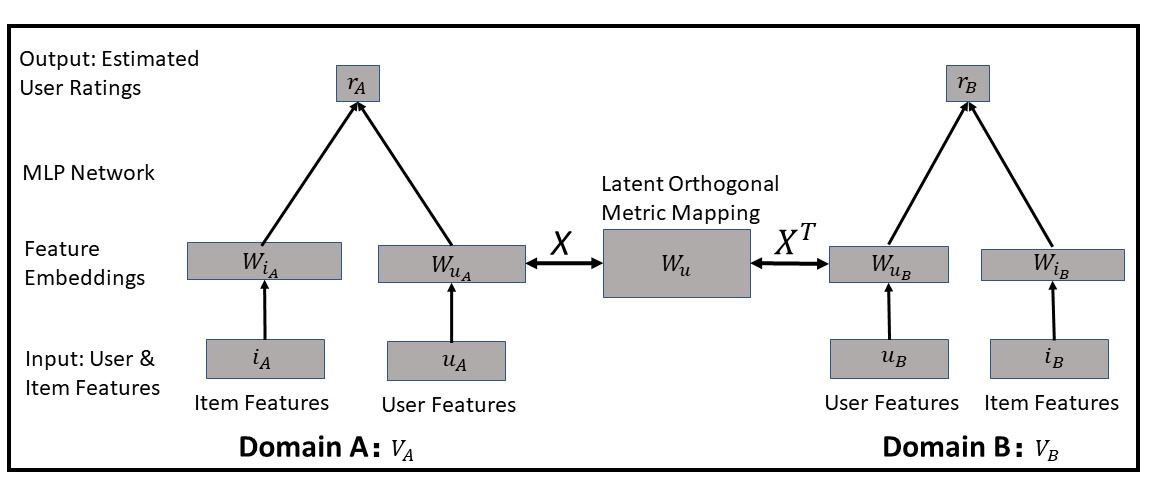}
\caption{Dual Metric Learning for Cross-Domain Recommendations}
\label{model}
\end{figure*}

\begin{algorithm}[!]
\caption{Dual Metric Learning (DML) Model}
\begin{algorithmic}[1]
\State \textbf{Input}: Domain $A$ and $B$, user embeddings $W_{u_{A}}$ and $W_{u_{B}}$, item embeddings $W_{i_{A}}$ and $W_{i_{B}}$, ratings $r_{A}$ and $r_{B}$, neural recommendation models $RS_{A}$ and $RS_{B}$, orthogonal mapping $X$, overlap users $ou_{A}=ou_{B}$
\Repeat
\For{($W_{u_{A}}$, $W_{i_{A}}$, $r_{A}$) in $A$, ($W_{u_{B}}$, $W_{i_{B}}$, $r_{B}$) in $B$}
\State Estimate $r_A'= RS_{A}(W_{u_{A}},W_{i_{A}})$ and $r_B'= RS_{B}(W_{u_{B}},W_{i_{B}})$
\State Backpropogate $L_{A} = ||r_A - r_A'||$ and $L_{B} = ||r_B - r_B'||$
\State Update $RS_{A}$ and $RS_{B}$
\EndFor
\For{$W_{ou_{A}}$, $W_{ou_{B}}$ in Overlap Users}
\State Backpropogate $L_{o_{A}} = ||XW_{ou_{A}} - W_{ou_{B}}||$ and $L_{o_{B}} = ||X^{T}W_{ou_{B}} - W_{ou_{A}}||$
\State Update and Orthogonalize $X$
\EndFor
\For{($W_{u_{A}}$, $W_{i_{A}}$, $r_{A}$) in $A$, ($W_{u_{B}}$, $W_{i_{B}}$, $r_{B}$) in $B$}
\State Estimate $r^{*}_{A}= RS_{B}(XW_{u_{A}},W_{i_{A}})$ and $r^{*}_{B}= RS_{A}(X^{T}W_{u_{B}},W_{i_{B}})$
\State Backpropogate $L_{A^{*}} = ||r_A - r^{*}_{A}||$ and $L_{B^{*}} = ||r_B - r^{*}_{B}||$
\State Update $RS_{A}$ and $RS_{B}$
\EndFor
\Until{convergence}
\end{algorithmic}
\label{dml}
\end{algorithm}

\subsection{User and Item Embeddings}
To effectively extract latent user preferences and efficiently model user and item features, we utilize the autoencoder framework to learn the latent representations of users and items, which transforms the heterogeneous and discrete feature vectors into continuous feature embeddings. The goal is to train two separate neural networks: the encoder network that maps feature vectors into the latent space, and the decoder network that reconstructs feature vectors from latent embeddings. Due to effectiveness and efficiency of the training process, we formulate both the encoder and the decoder as multi-layer perceptron (MLP).

Specifically, we denote features of user $a$ as $u_{a} = \{u_{a_{1}},u_{a_{2}},\cdots,u_{a_{m}}\}$ and features of item $b$ as $i_{b} = \{i_{b_{1}},i_{b_{2}},\cdots,i_{b_{n}}\}$, where $m$ and $n$ are the respective dimensions of feature vectors. MLP learns the hidden representations by optimization reconstruction loss $L$:
\begin{equation}
L = ||u_{a}-MLP_{dec}(MLP_{enc}(u_{a}))||
\end{equation}
where $MLP_{enc}$ and $MLP_{dec}$ represents the MLP network for the encoder and the decoder. In this step, we train the autoencoder separately for users and items in different domains to avoid the information leak between domains. We subsequently obtain the user embeddings and item embeddings as $W_{u_{a}}=MLP_{enc}(u_{a})$ and $W_{i_{b}}=MLP_{enc}(i_{b})$ for the recommendation stage.

Note that regularization technique is typically crucial to the feasibility and efficient optimization of the metric learning model. Especially when the data points are distributed in a high-dimensional space, they might spread too widely because of the curse of dimensionality and thus become ineffective to be mapped into the same shared latent space. Therefore, we restrict all the user and item embeddings $||W_{u}||$ and $||W{i}||$ within the unit ball, i.e., $||W_{u}||^{2}\leq 1$ and $||W{i}||^{2}\leq 1$, to ensure the robustness of the learned metric. As such, we do not need to impose additional regularization during the subsequent metric learning process.

\subsection{Metric Learning}
In this section, we introduce the proposed DML recommendation model to efficiently capture relative preferences between different domains through overlap users. To start with, we identify and extract overlap users and their embeddings in two domains as a set of embedding-embedding pairs $S$ that match with each other. Note that in the previous stage, we separately train user embeddings and item embeddings for each domain. Due to the heterogeneity of user preferences in different domains, the distribution of user embeddings in each domain should also be different. The underlying assumption we make is that, if two users have similar interest in one domain, then those two users would have similar interest in the other domain as well. We aim at addressing this assumption to incorporate consumption behaviors in one domain for better predictions of the other domain. Therefore, these overlap users can be used as `pivots' to learn the relations of user preferences and behaviors in different domains.

We aim to learn a transitional metric learning mapping to encode these relationships in a shared latent space, and the learned mapping should pull the overlap user embedding pairs closer and push the other pairs relatively further apart. According to the triangle inequality, this learning process will also cluster the items that are purchased by the same user across different domains together, and the users who purchase similar items across different domains together. Eventually, the nearest neighbor points for any given user in the shared latent space will become its representation in the other domains respectively. Through the metric learning mapping that effectively captures the relations between overlap users, we propagate these relations not only to overlap user-user pairs, but also to those user-item pairs and non-overlap user-user pairs for which we did not directly observe such relationships.

Distinguished from the classical metric learning settings, the proposed DML model restricts this transitional mapping to be an orthogonal mapping. We point out that orthogonality is important for cross-domain recommendations for the following reasons:
\begin{itemize}
\item Orthogonality preserves the similarities between embeddings of different users during the metric learning process since orthogonal transformation preserves inner product of latent vectors. 
\item Orthogonality mapping $X$ will automatically give the inverse mapping as $X^T$, because $Y = X^T(XY)$ holds for any given orthogonal mapping $X$, thus making inverse mapping matrix equivalent to its transpose. This simplifies the learning procedure and enables the dual learning process. 
\item Orthogonality restricts the parameter space of the transitional mapping and thus requires less overlap user information to determine this mapping function, as we will discuss in the next section.
\end{itemize}

Most importantly, the proposed DML model utilizes dual learning mechanism to update the recommender system for both domains simultaneously. We use the metric learning output of user embeddings in the source domain as the input of recommender system in the target domain, while in turn using the output of the target domain as the input of the source domain. In this way, we could improve the metric learning mapping as well as recommender systems for both domains simultaneously in each iteration. The learning process can then be repeated iteratively to obtain better metric mappings and recommender systems each time until the convergence criterion is met. As a result, it pushes the metric learning framework to better capture user preferences and thus provide even better recommendation performance. We illustrate the dual orthogonal metric learning process in Figure \ref{function}.

\begin{figure}
\centering
\includegraphics[width=0.35\textwidth]{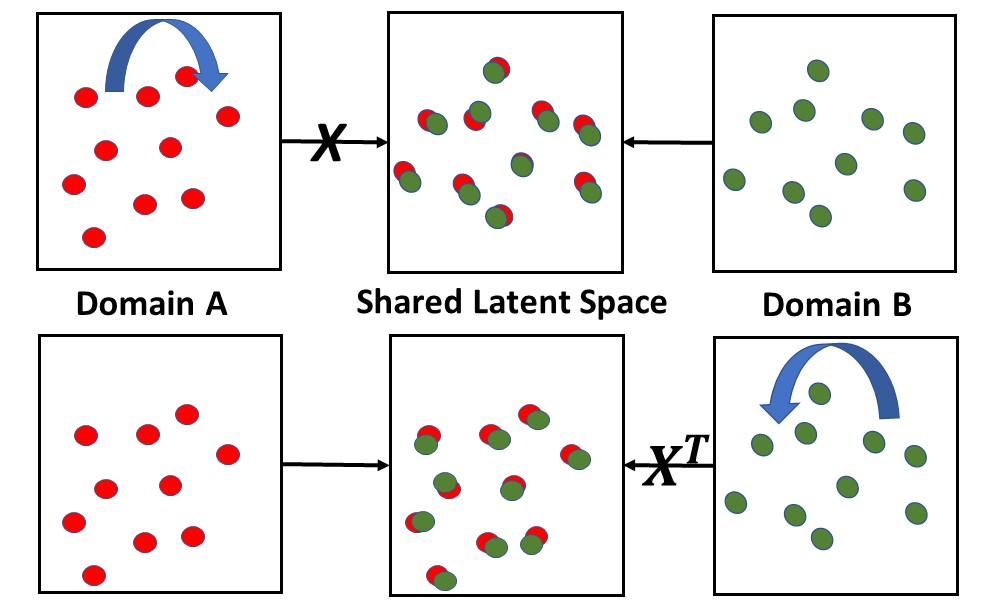}
\caption{Dual Metric Learning} 
\label{function}
\end{figure}

Specifically, we denote the user embeddings for domain $A$ and $B$ as $W_{u_{A}}$, $W_{u_{B}}$ that we obtained in the previous section. We also denote the overlap users between two domains as $ou_{A}=ou_{B}$, and their user embeddings $W_{ou_{A}}$, $W_{ou_{B}}$ correspondingly. The goal of proposed DML model is to find the optimal mapping matrix $X$ to minimize the sum of squared Euclidean distances between the mapped user embeddings $XW_{ou_{A}}$ and the target user embeddings $W_{ou_{B}}$ for the same overlap users: 
\begin{equation}
L_{X} = argmin_{X} \sum_{W_{ou_{A}},W_{ou_{B}}\in ou_{A},ou_{B}} ||XW_{ou_{A}}-W_{ou_{B}}||^{2}
\end{equation}
This optimization is equivalent to its dual form:
\begin{equation}
L_{X^{T}} = argmin_{X} \sum_{W_{ou_{A}},W_{ou_{B}}\in ou_{A},ou_{B}} ||W_{ou_{A}}-X^{T}W_{ou_{B}}||^{2}
\end{equation}
We constrain the metric learning mapping $X$ to be an orthogonal mapping (i.e. $XX^{T}=X^{T}X=I$), which serves to enforce structural invariance of user preferences in each domain, while preventing a degradation in mono-domain recommendation performance for learning better transitional mappings. We optimize Equation (3) and (4) simultaneously to learn the orthogonal metric mapping.

\subsection{Dual Learning}
To further improve performance of recommendations, we propose to combine dual learning mechanism with cross domain recommendations, where we transfer user preferences between the source domain and the target domain simultaneously. Consider two different domains $D_{A}$ and $D_{B}$ that contain user-item interactions as well as user and item features. To obtain better understanding of user preferences in $D_{A}$, we also utilize external user preference information from $D_{B}$ and combine them together. Similarly, we could get better recommendation performance in $D_{B}$ if we utilize user preference information from $D_{A}$ at the same step. To leverage duality of the two transfer learning based recommendation models and to improve effectiveness of both tasks simultaneously, we conduct dual learning recommendations for the two models \textit{together} and update the recommendation models for both domains accordingly.  

Specifically, we propose to model cross-domain user preferences utilizing the metric mapping $X$ learned in the previous stage and taking into account heterogeneous user behavior in both domains. In that sense, we estimate user ratings in domain pairs $(A,B)$ as follows:
\begin{equation}
r^{*}_A= RS_{A}(XW_{u_{A}},W_{i_{A}})
\end{equation}
\begin{equation}
r^{*}_B= RS_{B}(X^{T}W_{u_{B}},W_{i_{B}})
\end{equation}
where $W_{u_{A}},W_{i_{A}},W_{u_{B}},W_{i_{B}}$ represents user and item embeddings and $RS_{A}, RS_{B}$ stand for the domain-specific neural recommendation model for domain A and B respectively.  Here, we intentionally transfer the user representations from one domain to the other and force the recommendation to learn the user preferences from the other domain in order to provide better recommendations. As shown in Figure \ref{model}, dual learning entails the transfer loop across two domains and the learning process goes through the loop iteratively. It is also important to study the convergence property of the model, which we discuss in the next section.

\subsection{Summary of DML Model}
To conclude, we propose the DML method to model the overlap user information for providing better cross-domain recommendations. Specifically, we impose the orthogonal constraint to the metric learning mapping and perform dual learning mechanism to effectively and efficiently learn the mapping function as well as recommender systems at the same time. In the final stage, we concatenate the item embeddings with the user embeddings in the shared latent space for subsequent rating estimation. We feed them into the Multi-Layer Perceptron (MLP) network with dropout to provide final cross-domain recommendations due to its superior representational capacity and ease of training \cite{he2017neural}. 

Compared to the previous approaches, the proposed DML model has the following advantages. It
\begin{itemize}
\item utilizes dual learning mechanism to enable bidirectional transfer of user preferences that improves recommendation performance simultaneously in both domains over time;
\item combines dual learning with metric learning to effectively obtain latent relations across different domains, thus significantly reducing the amount of overlap users required to achieve strong performance;
\item transfers latent representations of features and user preferences, instead of explicit information, between different domains to capture latent and complex interactions between the users and items;
\item learns the latent \textit{orthogonal} mapping function across the two domains, which is capable of preserving similarities of user preferences and computing the inverse mapping function efficiently.
\end{itemize}

We will show in the following section that the proposed model is capable of providing state-of-the-art cross-domain recommendation performance and significantly reducing the need for user overlap information.

\section{Theoretical Analysis}
In this section, we provide the theoretical foundations to justify the design of our DML model. Specifically, we point out that DML reduces the need for overlap information, guarantees to converge under certain conditions and could be easily extended to multi-domain scenarios.

\subsection{Reducing the need for overlap information}
By framing the cross-domain recommendation problem as the metric learning problem, we do not need to extract explicit user behaviors from both domains; instead, we only need to learn the metric learning mapping that represents the relations between the two domains. In addition, by mapping both user and item features into the latent space, we overcome the problem of discretization in the feature space and mathematically formulate the problem as follows: given a set of $k$-dimensional feature vectors $\{W_{u_{A_{1}}}, W_{u_{A_{2}}}, \cdots, W_{u_{A_{n}}}\}$ in domain A and $\{W_{u_{B_{1}}}, W_{u_{B_{2}}}, \cdots, W_{u_{B_{n}}\}}$ in domain B, the goal is to identify the matrix $X$ that $XW_{u_{A_{i}}}=W_{u_{B_{i}}}$ holds for every $i$ from $1$ to $n$. In addition, we restricts this matrix $X$ to be orthogonal, i.e., $XX^{T}=I$. From basic linear algebra \cite{meyer2000matrix}, we know that the unique solution exists when the number of restricted equations $n=\frac{k(k-1)}{2}$. Therefore, when the dimension of feature embeddings is $k$, we only need $\frac{k(k-1)}{2}$ number of overlap users to determine the metric learning mapping. As shown in the results section, we conduct sensitivity analysis to study the impact of different embedding dimensions towards cross domain recommendation performance, which shows that we can achieve reasonable and satisfying performance using 16 dimensions of latent vectors. Therefore, our theoretical analysis shows that we could determine the metric learning mapping using only 120 overlap users for our recommendation tasks, and experiment results further confirm that we could provide satisfying recommendation performance using as little as 8 overlap users.

\subsection{Convergence Analysis}
In this section, we present the convergence theorem of classical recommender systems with dual learning mechanism. We denote the rating matrices as $V_{A},V_{B}$ for the two domains A and B respectively. The goal is to find the approximate non-negative factorization matrices that simultaneously minimizes the reconstruction loss. We conduct multiplicative update algorithms \cite{lin2007convergence} for the two reconstruction loss iteratively given $\alpha$, $X$, $V_{A}$ and $V_{B}$.
\begin{equation}
\begin{split}
\displaystyle \min_{W_{A},W_{B},H_{A},H_{B}} & |V_{A} - (1-\alpha) W_{A}H_{A} - \alpha XW_{B}H_{B}| \\
  + & |V_{B} - (1-\alpha) W_{B}H_{B} - \alpha X^TW_{A}H_{A}|
\end{split}
\end{equation}
\begin{prop}
Convergence of the iterative optimization of dual matrix factorization for rating matrices $V_{A}$ and $V_{B}$ in (7) is guaranteed.
\end{prop}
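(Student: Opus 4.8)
The plan is to establish convergence via the auxiliary-function (majorization--minimization) framework that underlies the classical Lee--Seung argument for non-negative matrix factorization, adapted to the coupled objective in (7). First I would observe that the objective $F(W_A,W_B,H_A,H_B)$ is a sum of two Frobenius-type reconstruction terms and is therefore bounded below by $0$; hence it suffices to exhibit an update scheme under which $F$ is monotonically non-increasing, since a bounded monotone sequence of real values converges. The multiplicative update algorithm referenced for (7) is exactly such a scheme, so the whole task reduces to showing each update does not increase $F$.

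I would proceed by block coordinate descent, cycling through the four factor matrices and updating one while holding the other three fixed. The subtlety relative to standard NMF is that each factor now appears in \emph{both} reconstruction terms: for instance $W_A$ enters the first term through $(1-\alpha)W_A H_A$ and the second through $\alpha X^T W_A H_A$. To handle this cleanly I would first exploit orthogonality. Since $XX^{T}=X^{T}X=I$, the Frobenius norm is invariant under $X$, so multiplying the second summand by $X$ rewrites (7) with $X V_B$ in place of $V_B$ and with $W_A H_A$ and $X W_B H_B$ playing symmetric roles across the two terms. This symmetrization collects the dependence of each factor into a single quadratic form in that factor.

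For each block I would then construct a quadratic auxiliary function $G(\cdot,\cdot)$ majorizing the restriction of $F$ to that block, using the standard device of applying Jensen's inequality to the convex quadratic and diagonally dominating the curvature so that the minimizer of $G$ is available in closed form. Minimizing $G$ yields a multiplicative update of the usual ratio-of-positive-terms shape, which (a) preserves non-negativity from a non-negative initialization and (b) guarantees $F$ does not increase, via the chain $F(\text{new}) \le G(\text{new},\text{old}) \le G(\text{old},\text{old}) = F(\text{old})$. Carrying this out for $W_A$, $H_A$, $W_B$, and $H_B$ in turn produces a full sweep that is non-increasing, which together with the lower bound delivers convergence of the objective sequence.

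The hard part will be the coupling introduced by $X$ when building the auxiliary functions: unlike the decoupled case, the gradient and the majorizing curvature for, say, $W_A$ receive contributions from two different residuals, and I must verify that separating the positive and negative parts of the combined gradient still yields a valid majorizer whose fixed point coincides with the stationarity (KKT) condition for (7). A second delicate point is that $X W_B$ need not be non-negative even when $W_B$ is, so the positivity bookkeeping that makes the multiplicative updates well defined has to be argued on the correctly grouped terms rather than on $X W_B$ directly; the orthogonal rewriting above is precisely what lets me keep each multiplicative factor a ratio of genuinely non-negative quantities.
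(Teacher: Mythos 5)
Your plan takes a genuinely different route from the paper. The paper never attacks the coupled objective with a majorization argument: it algebraically eliminates $W_AH_A$ (resp.\ $W_BH_B$) between the two residuals, which decouples (7) into two \emph{independent} classical NMF problems with effective targets $\bigl((1-\alpha)V_B-\alpha X^TV_A\bigr)/(1-2\alpha)$ and $\bigl((1-\alpha)V_A-\alpha XV_B\bigr)/(1-2\alpha)$, and then invokes the Lee--Seung convergence result as a black box under three explicit conditions: $2\alpha-1<0$ and entrywise non-negativity of the two effective targets. The non-negativity conditions are then secured by a positive-perturbation trick, $V_A'=V_A+mk\mathbf{1}$ and $V_B'=V_B+mk\mathbf{1}$. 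You instead keep the coupling and try to re-derive the auxiliary-function argument block by block, which in principle is a more self-contained (and potentially more informative) strategy, since it would not require the exact-factorization elimination step.

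However, there is a genuine gap at exactly the point you flag as delicate: the claim that the orthogonal rewriting ``lets me keep each multiplicative factor a ratio of genuinely non-negative quantities'' is not correct. A non-negative orthogonal matrix is necessarily a permutation matrix (orthonormal non-negative columns must have disjoint supports), so in any non-trivial instance $X$ has negative entries, and after your symmetrization the matrices $XV_B$ and $XW_BH_B$ appearing in the objective are still of mixed sign. The Lee--Seung auxiliary-function construction breaks precisely there: the Jensen/convexity step requires non-negative weights, and with a mixed-sign effective target for the $W_A$-block (namely $V_A-\alpha XW_BH_B$ and $XV_B-(1-\alpha)XW_BH_B$) the numerator of the multiplicative ratio can be negative, the update no longer preserves non-negativity of the iterates, and $G$ is no longer a majorizer, so the chain $F(\mathrm{new})\le G(\mathrm{new},\mathrm{old})\le G(\mathrm{old},\mathrm{old})=F(\mathrm{old})$ collapses. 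To close the gap you would need either the semi-NMF device of splitting every mixed-sign matrix into positive and negative parts and proving a valid majorizer for the resulting updates, or some analogue of what the paper does: impose conditions guaranteeing all effective targets are non-negative (their conditions (a)--(c)) and enforce them via the $mk\mathbf{1}$ perturbation. Note also that your argument as written imposes no condition on $\alpha$ at all, whereas the degeneracy at $\alpha=\tfrac12$ (where the paper's $(1-2\alpha)$ factor vanishes) suggests some such restriction is not an artifact of their method but intrinsic to the problem.
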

\begin{proof}
Combining the two parts of objective functions and eliminating $W_{A}H_{A}$ term, we will get
\begin{equation}
(1-\alpha) V_{B} - \alpha X^TV_{A} = (1- 2\alpha)W_{B}H_{B}
\end{equation}
Similarly, when we eliminate $W_{B}H_{B}$ term, we will get
\begin{equation}
(1-\alpha) V_{A} - \alpha XV_{B} = (1- 2\alpha)W_{A}H_{A}
\end{equation}
Based on the analysis in \cite{lee2001algorithms} for the classical single-domain matrix factorization, to show that the repeated iteration of update rules is guaranteed to converge to a locally optimal solution, it is sufficient to show that 
$$ \left\{
\begin{array}{rcl}
2\alpha - 1 < 0       &      &  (a)\\
(1-\alpha) V_{B} - \alpha X^TV_{A} \ge \textbf{0}    &      & (b)\\
(1-\alpha) V_{A} - \alpha XV_{B} \ge \textbf{0}     &      & (c)\\
\end{array} \right. $$
where \textbf{0} stands for the zero matrix. Condition (a) is an intuitive condition which indicates that the information from the target domain dominantly determines the user preference, while information transferred from the source domain only serves as the regularizer in the learning period. To fulfill the seemingly complicated condition (b) and (c), we recall that $V_{A}$ and $V_{B}$ are ratings matrices, which are non-negative and bounded by the rating scale $k$. We design two ''positive perturbation" $V_{A}' = V_{A} + mk\textbf{1}$ and $V_{B}' = V_{B} + mk\textbf{1}$ where $m$ is the rank of the mapping matrix $X$. Condition (a) is independent of the specific rating matrices, and we could check that $V_{A}'$ and $V_{B}'$ satisfies condition (b) and (c). Thus, the matrix factorization of $V_{A}'$ and $V_{B}'$ is guaranteed convergence; to reconstruct the original matrix $V_{A}$ and $V_{B}$, we only need to deduce the ''positive perturbation'' term $m$, so the original problem is also guaranteed convergence.
\end{proof}

In this paper, to capture latent interactions between users and items, we use the neural network based recommendation approach, therefore it still remains unclear if this convergence happens to the proposed model. However, our hypothesis is that even in our case we will experience similar convergence process guaranteed in the classical matrix factorization case, as stated in the proposition. We test and validate this hypothesis in the result section.

\subsection{Extension to Multiple Domains}
In previous sections, we describe the proposed model with special focus on the idea of combining dual metric learning mechanism with latent embedding approaches. We point out that the proposed model not only works for cross domain recommendations between two domains, but can easily extend to recommendations between multiple domains as well. Consider $N$ different domains $D_{1}, D_{2}, D_{3}, \cdots, D_{N}$. To provide cross-domain recommendations for these domains, we only need to obtain the $N-1$ orthogonal mapping learning mapping $X_{12}, X_{23}, \cdots, X_{(N-1)N}$, and the latent orthogonal transfer matrix $X_{jk}$ between domain $D_{j}$ and $D_{k}$ could be obtained as $X_{jk}=X_{j(j+1)}*X_{(j+1)(j+2)}*\cdots *X_{(k-1)k}$. Therefore, the proposed model is capable of providing recommendations for multiple-domain application effectively and efficiently.

\section{Experiment}
To demonstrate the superiority of our proposed method, we conducted extensive experiments on two real-world datasets in this paper to study the following research questions: (1) the cross-domain recommendation performance of different approaches; (2) the impact of different architectures of neural networks; (3) the influence of including and excluding explicit feature information during the representation learning stage; (4) the impact of different numbers of overlap users included in the model; (5) the sensitivity of hyperparameters, particularly the dimension of latent embeddings; (6) the scalability of the proposed approach; and (7) the convergence behavior of the proposed approach. Our models and codes have been made publicly available \footnote{https://github.com/lpworld/DOML}.
 
\subsection{Dataset}
We evaluate the cross-domain recommendation performance of the proposed on two sets of large-scale industrial datasets: the Imhonet dataset \cite{DBLP:conf/cla/BobrikovNI16}, which is obtained directly from the user logs of the European online recommendation service Imhonet, containing three domains of user and item features as well as rating information, namely Book, Movie and Music; and the Amazon dataset \cite{ni2019justifying} \footnote{https://nijianmo.github.io/amazon/index.html} which consists of user purchase actions and rating information collected from the Amazon platform, and we select three domains with sufficient amount of overlap users for conducting our experiments: Beauty, Fashion and Art Crafts. Overlap users across different domains can be identified by the same user ID. We normalize the scale of ratings to between 0 and 1 for both datasets. The basic statistics about these three datasets are shown in Table \ref{imhonet} and Table \ref{amazon}. There are around 2,000 overlap users between each domain pair in the Imhonet dataset, and around 30,000 overlap users between each domain pair in the Amazon dataset.

\begin{table}
\centering
\begin{tabular}{|llll|}
\hline
Datasets & Book & Movie & Music \\ \hline
\# Users & 804,285 & 959,502 & 45,962 \\ 
\# Items & 182,653 & 79,866 & 183,114 \\ 
\# Records & 223,007,805 & 51,269,130 & 2,536,273 \\ 
Sparsity & 0.0157\% & 0.0669\% & 0.0301\% \\ \hline
\end{tabular}
\caption{Descriptive Statistics for the Imhonet Dataset}
\label{imhonet}
\end{table}

\begin{table}
\centering
\begin{tabular}{|l l l l|}
\hline
Domain & Beauty & Fashion & Art Crafts \\ \hline
\#Users & 324,038 & 749,233 & 1,579,230 \\ 
\#Items & 32,586 & 186,189 & 302,809 \\
\#Records & 371,345 & 883,636 & 2,875,917 \\
Sparsity & 0.0035\% & 0.0006\% & 0.0006\% \\ \hline
\end{tabular}
\caption{Descriptive Statistics of the Amazon Dataset}
\label{amazon}
\end{table}

\subsection{User and Item Features}
Besides the records of interactions between users and items in the Imhonet dataset, the online platform also collects user information by asking them online questions. Typical question examples include preferences between two items, ideas about recent events, life styles, demographic information, etc.. From all these questions, we select eight most-popular questions and use the corresponding answers from users as user features. Meanwhile, although the online platform does not directly collect item features, we obtain these information through Google Books API\footnote{https://developers.google.com/books/}, IMDB API\footnote{http://www.omdbapi.com/} and Spotify API\footnote{https://developer.spotify.com/documentation/web-api/}, following the method in \cite{li2020ddtcdr}. To ensure the correctness of the collected item features, we validate the retrieved results with the timestamp included in the dataset. To sum up, for the Imhonet dataset, we include the user features of Age, Movie Taste, Residence, Preferred Category, Recommendation Usage, Marital Status and Personality, the book features of Category, Title, Author, Publisher, Language, Country, Price and Date, the movie features of Genre, Title, Director, Writer, Runtime, Country, Rating and Votes, and the music features of Listener, Playcount, Artist, Album, Tag, Release, Duration and Title in the recommendation process to construct user and item embeddings correspondingly.

For the Amazon dataset used in our study, we also collect the meta-information of users and items in each domain, as provided by the Amazon platform \cite{ni2019justifying}. In particular, we include the user features of rating, purchase history, review and timestamp, and the item features of category, description, title and price to generate user and item embedding respectively for the three domains.

\subsection{Baseline Models}
To conduct experiments and evaluate our model, we utilize the 5-fold cross validation and evaluate the recommendation performance based on RMSE, MAE, Precision and Recall metrics \cite{ricci2011introduction}. We compare the performance with a group of state-of-the-art methods and report the results in the next section.

\begin{itemize}
\item \textbf{DDTCDR \cite{li2020ddtcdr}} Deep Dual Transfer Cross Domain Recommendation (DDTCDR) efficiently transfers user preferences across domain pairs through dual learning mechanism.
\item \textbf{CML\cite{hsieh2017collaborative}} Collaborative Metric Learning (CML) learns a joint metric space to encode users' preferences and similarity information.
\item \textbf{CCFNet\cite{lian2017cccfnet}} Cross-domain Content-boosted Collaborative Filtering neural NETwork (CCCFNet) utilizes factorization to tie CF and content-based filtering together with a unified multi-view neural network.
\item \textbf{CDFM\cite{loni2014cross}} Cross Domain Factorization Machine (CDFM) proposes an extension of FMs that incorporates domain information in interaction patterns.
\item \textbf{CoNet\cite{hu2018conet}} Collaborative Cross Networks (CoNet) enables knowledge transfer across domains by cross connections between base networks.
\item \textbf{CMF\cite{singh2008relational}}  Collective Matrix Factorization (CMF) simultaneously factor several matrices, sharing parameters among factors when a user participates in multiple domains.
\item \textbf{NCF\cite{he2017neural}} Neural Collaborative Filtering (NCF) models latent features of users and items using collaborative filtering method and obtains satisfying recommendation performance. In this baseline, we train the NCF model separately to provide recommendations for each domain.
\item \textbf{NCF-joint\cite{he2017neural}} In this baseline, we merge the data records of both domains and jointly train the NCF model to provide recommendations for both domains simultaneously.
\end{itemize}

\subsection{Parameter Settings}
To obtain the superior recommendation performance, we conduct Bayesian Optimization \cite{NIPS2012_4522} to identify the best suitable hyperparameters for our proposed DML model as well as baseline models. Specifically, we select the dimension of feature embeddings as 16 and the batch size as 64. We construct both the encoder and the decoder as the Multi-Layer Perceptron with two hidden layers of 32 and 16 neurons respectively. The orthogonal constraint of the metric learning mapping is achieved through the Gram-Schmidt orthogonalization mechanism \cite{bjorck1994numerics,bansal2018can}.
 
\section{Results}
\subsection{Cross-Domain Recommendation Performance}
Since we have three different domains in each dataset, this results in three domain pairs for evaluation in each recommendation task. We report the experimental results of cross domain recommendation performance for DML and other baselines for the Imhonet dataset in Tables \ref{result1}, \ref{result2} and \ref{result3}, and the results for the Amazon dataset in Tables \ref{result4}, \ref{result5} and \ref{result6} respectively. As shown in these tables, the proposed DML model significantly and consistently outperforms all the baseline models across all six recommendation tasks and all the evaluation metrics. In particular in the Book-Movie recommendation scenario for the Imhonet dataset, we have observed increases of 3.01\% in the RMSE metric, 4.96\% in MAE, 2.28\% in Precision and 2.81\% in Recall over the second-best baseline approach. We have also observed similar significant improvements in other experimental settings. Therefore, we have shown that the proposed dual metric learning method works well in practice and achieves significant performance improvements vis-a-vis other state-of-the-art cross domain recommendation approaches.

\begin{table*}
\centering
\begin{tabular}{|c|cccc|cccc|} \hline
\multirow{2}{*}{Algorithm} & \multicolumn{4}{c|}{Book Dataset} & \multicolumn{4}{c|}{Movie Dataset} \\ \cline{2-9}
               &  RMSE & MAE & Pre@5 & Rec@5 & RMSE & MAE & Pre@5 & Rec@5 \\ \hline
\textbf{DML} & \textbf{0.2184*} & \textbf{0.1646*} & \textbf{0.8826*} & \textbf{0.9850*} & \textbf{0.2109*} & \textbf{0.1606*} & \textbf{0.9002*} & \textbf{0.9962*} \\ Improved \% & (+1.31\%) & (+3.63\%) & (+2.69\%) & (+3.71\%) & (+4.70\%) & (+6.30\%) & (+0.86\%) & (+0.92\%) \\ \hline
DDTCDR & 0.2213 & 0.1708 & 0.8595 & 0.9594 & 0.2213 & 0.1714 & 0.8925 & 0.9871 \\
CML & 0.2408 & 0.1927 & 0.8150 & 0.8864 & 0.2342 & 0.1924 & 0.8520 & 0.9380 \\ 
CCFNet & 0.2639 & 0.1841 & 0.8102 & 0.8872 & 0.2476 & 0.1939 & 0.8545 & 0.9300 \\ 
CDFM & 0.2494 & 0.2165 & 0.7978 & 0.8610 & 0.2289 & 0.1901 & 0.8498 & 0.9312 \\ 
CMF & 0.2921 & 0.2478 & 0.7972 & 0.8523 & 0.2738 & 0.2293 & 0.8324 & 0.9012 \\ 
CoNet & 0.2305 & 0.1892 & 0.8328 & 0.8990 & 0.2298 & 0.1903 & 0.8680 & 0.9601 \\ 
NCF & 0.2315 & 0.1887 & 0.8357 & 0.8924 & 0.2276 & 0.1895 & 0.8428 & 0.9495 \\ 
NCF-joint & 0.2378 & 0.1920 & 0.8275 & 0.8880 & 0.2254 & 0.1880 & 0.8410 & 0.9478 \\ \hline
\end{tabular}
\caption{Comparison of Cross-Domain Recommendation Performance between Book and Movie Datasets. * stands for 95\% significance.}
\label{result1}
\begin{tabular}{|c|cccc|cccc|} \hline
\multirow{2}{*}{Algorithm} & \multicolumn{4}{c|}{Book Dataset} & \multicolumn{4}{c|}{Music Dataset} \\ \cline{2-9}
               &  RMSE & MAE & Pre@5 & Rec@5 & RMSE & MAE & Pre@5 & Rec@5 \\ \hline
\textbf{DML} & \textbf{0.2162*} & \textbf{0.1615*} & \textbf{0.8649*} & \textbf{0.9645*} & \textbf{0.2689*} & \textbf{0.2255*} & \textbf{0.8475*} & \textbf{0.9070*} \\
Improved \% & (+2.13\%) & (+5.22\%) & (+0.92\%) & (+0.44\%) & (+4.94\%) & (+2.04\%) & (+0.99\%) & (+1.59\%) \\ \hline
DDTCDR & 0.2209 & 0.1704 & 0.8570 & 0.9602 & 0.2753 & 0.2302 & 0.8392 & 0.8928 \\
CML & 0.2398 & 0.1902 & 0.8112 & 0.9120 & 0.3075 & 0.2478 & 0.7922 & 0.8402 \\ 
CCFNet & 0.2630 & 0.1842 & 0.8150 & 0.9108 & 0.3090 & 0.2422 & 0.7902 & 0.8388 \\
CDFM & 0.2489 & 0.2155 & 0.8104 & 0.9102 & 0.3252 & 0.2463 & 0.7895 & 0.8365 \\
CMF & 0.2921 & 0.2478 & 0.8072 & 0.8978 & 0.3478 & 0.2698 & 0.7820 & 0.8324 \\
CoNet & 0.2307 & 0.1897 & 0.8230 & 0.9300 & 0.2801 & 0.2410 & 0.7912 & 0.8428 \\ 
NCF & 0.2315 & 0.1887 & 0.8357 & 0.8924 & 0.2828 & 0.2423 & 0.7930 & 0.8450 \\ 
NCF-joint & 0.2377 & 0.1870 & 0.8388 & 0.8890 & 0.2844 & 0.2440 & 0.7912 & 0.8428 \\ \hline
\end{tabular}
\caption{Comparison of Cross-Domain Recommendation Performance between Book and Music Datasets. * stands for 95\% significance.}
\label{result2}
\begin{tabular}{|c|cccc|cccc|} \hline
\multirow{2}{*}{Algorithm} & \multicolumn{4}{c|}{Movie Dataset} & \multicolumn{4}{c|}{Music Dataset} \\ \cline{2-9}
               &  RMSE & MAE & Pre@5 & Rec@5 & RMSE & MAE & Pre@5 & Rec@5 \\ \hline
\textbf{DML} & \textbf{0.2111*} & \textbf{0.1647*} & \textbf{0.9013*} & \textbf{0.9980*} & \textbf{0.2718*} & \textbf{0.2231*} & \textbf{0.8487*} & \textbf{0.9050*} \\
Improved \% & (+2.90\%) & (+4.24\%) & (+0.97\%) & (+1.12\%) & (+1.45\%) & (+3.46\%) & (+1.40\%) & (+1.66\%) \\ \hline
DDTCDR & 0.2174 & 0.1720 & 0.8926 & 0.9869 & 0.2758 & 0.2311 & 0.8370 & 0.8902 \\
CML & 0.2259 & 0.1870 & 0.8578 & 0.9270 & 0.3018 & 0.2590 & 0.7875 & 0.8390 \\ 
CCFNet & 0.2468 & 0.1932 & 0.8398 & 0.9310 & 0.3090 & 0.2433 & 0.7952 & 0.8498 \\
CDFM & 0.2289 & 0.1895 & 0.8306 & 0.9382 & 0.3252 & 0.2467 & 0.7880 & 0.8460 \\
CMF & 0.2738 & 0.2293 & 0.8278 & 0.9222 & 0.3478 & 0.2698  & 0.7796 & 0.8400 \\
CoNet & 0.2302 & 0.1908 & 0.8450 & 0.9508 & 0.2811 & 0.2428 & 0.8010 & 0.8512 \\ 
NCF &  0.2276 & 0.1895 & 0.8428 & 0.9495 & 0.2828 & 0.2423 & 0.7930 & 0.8450 \\
NCF-joint &  0.2238 & 0.1834 & 0.8450 & 0.9508 & 0.2877 & 0.2445 & 0.7898 & 0.8438 \\ \hline
\end{tabular}
\caption{Comparison of Cross-Domain Recommendation Performance between Movie and Music Datasets. * stands for 95\% significance.}
\label{result3}
\end{table*}

\begin{table*}
\centering
\begin{tabular}{|c|cccc|cccc|} \hline
\multirow{2}{*}{Algorithm} & \multicolumn{4}{c|}{Beauty} & \multicolumn{4}{c|}{Fashion} \\ \cline{2-9}
               & RMSE & MAE & Pre@5 & Rec@5 & RMSE & MAE & Pre@5 & Rec@5 \\ \hline
\textbf{DML} & \textbf{0.2930*} & \textbf{0.1972*} & \textbf{0.8354*} & \textbf{0.8702*} & \textbf{0.2997*} & \textbf{0.2398*} & \textbf{0.8001*} & \textbf{0.8577*} \\
Improved \% & (+1.97\%) & (+1.93\%) & (+1.49\%) & (+1.26\%) & (+0.93\%) & (+1.24\%) & (+1.30\%) & (+2.08\%) \\ \hline
DDTCDR & 0.2989 & 0.2012 & 0.8231 & 0.8594 & 0.3025 & 0.2428 & 0.7898 & 0.8402 \\
CML & 0.3033 & 0.2046 & 0.8125 & 0.8533 & 0.3077 & 0.2455 & 0.7860 & 0.8371 \\
CCCFNet & 0.3045 & 0.2035 & 0.8134 & 0.8546 & 0.3089 & 0.2460 & 0.7852 & 0.8362 \\
CDFM & 0.3022 & 0.2035 & 0.8130 & 0.8548 & 0.3102 & 0.2489 & 0.7834 & 0.8355 \\
CMF & 0.3017 & 0.2028 & 0.8122 & 0.8533 & 0.3111 & 0.2501 & 0.7798 & 0.8328 \\
CoNet & 0.2968 & 0.1982 & 0.8218 & 0.8590 & 0.3044 & 0.2433 & 0.7883 & 0.8402 \\ 
NCF & 0.3010 & 0.2023 & 0.8191 & 0.8577 & 0.3077 & 0.2472 & 0.7860 & 0.8355 \\
NCF-joint & 0.2988 & 0.2010 & 0.8205 & 0.8582 & 0.3065 & 0.2466 & 0.7877 & 0.8364 \\ \hline
\end{tabular}
\caption{Comparison of Cross-Domain Recommendation Performance between Beauty and Fashion Datasets.  `*' stands for 95\% significance.}
\label{result4}
\begin{tabular}{|c|cccc|cccc|} \hline
\multirow{2}{*}{Algorithm} & \multicolumn{4}{c|}{Beauty} & \multicolumn{4}{c|}{Art Crafts} \\ \cline{2-9}
               & RMSE & MAE & Pre@5 & Rec@5 & RMSE & MAE & Pre@5 & Rec@5 \\ \hline
\textbf{DML} & \textbf{0.2941*} & \textbf{0.1986*} & \textbf{0.8341*} & \textbf{0.8725*} & \textbf{0.2633*} & \textbf{0.2011*} & \textbf{0.8905*} & \textbf{0.9203*} \\
Improved \% & (+1.70\%) & (+1.39\%) & (+1.48\%) & (+1.60\%) & (+1.98\%) & (+3.87\%) & (+2.16\%) & (+1.19\%) \\ \hline
DDTCDR & 0.2992 & 0.2014 & 0.8219 & 0.8588 & 0.2680 & 0.2092 & 0.8717 & 0.9095 \\
CML & 0.3065 & 0.2044 & 0.8122 & 0.8533 & 0.2729 & 0.2130 & 0.8644 & 0.9011 \\
CCCFNet & 0.3044 & 0.2031 & 0.8137 & 0.8540 & 0.2717 & 0.2119 & 0.8655 & 0.9039 \\
CDFM & 0.3021 & 0.2037 & 0.8130 & 0.8548 & 0.2708 & 0.2127 & 0.8652 & 0.9042 \\
CMF & 0.3021 & 0.2029 & 0.8127 & 0.8532 & 0.2726 & 0.2133 & 0.8637 & 0.9013 \\
CoNet & 0.2977 & 0.1982 & 0.8211 & 0.8592 & 0.2689 & 0.2080 & 0.8723 & 0.9049 \\ 
NCF & 0.3017 & 0.2029 & 0.8197 & 0.8570 & 0.2707 & 0.2111 & 0.8699 & 0.9071 \\
NCF-joint & 0.3001 & 0.2009 & 0.8203 & 0.8588 & 0.2695 & 0.2099 & 0.8711 & 0.9092 \\ \hline
\end{tabular}
\caption{Comparison of Cross-Domain Recommendation Performance between Beauty and Art Crafts Datasets.  `*' stands for 95\% significance.}
\label{result5}
\begin{tabular}{|c|cccc|cccc|} \hline
\multirow{2}{*}{Algorithm} & \multicolumn{4}{c|}{Fashion} & \multicolumn{4}{c|}{Art Crafts} \\ \cline{2-9}
               & RMSE & MAE & Pre@5 & Rec@5 & RMSE & MAE & Pre@5 & Rec@5 \\ \hline
\textbf{DML} & \textbf{0.2984*} & \textbf{0.2375*} & \textbf{0.8023*} & \textbf{0.8514*} & \textbf{0.2622*} & \textbf{0.2030*} & \textbf{0.8895*} & \textbf{0.9233*} \\
Improved \% & (+1.29\%) & (+2.22\%) & (+1.56\%) & (+1.37\%) & (+2.46\%) & (+2.96\%) & (+2.09\%) & (+1.60\%) \\ \hline
DDTCDR & 0.3023 & 0.2429 & 0.7900 & 0.8399 & 0.2688 & 0.2092 & 0.8713 & 0.9088 \\
CML & 0.3092 & 0.2477 & 0.7839 & 0.8337 & 0.2725 & 0.2136 & 0.8637 & 0.9010 \\
CCCFNet & 0.3087 & 0.2462 & 0.7849 & 0.8357 & 0.2711 & 0.2116 & 0.8664 & 0.9046 \\
CDFM & 0.3100 & 0.2488 & 0.7831 & 0.8359 & 0.2709 & 0.2129 & 0.8651 & 0.9040 \\
CMF & 0.3112 & 0.2506 & 0.7801 & 0.8333 & 0.2731 & 0.2137 & 0.8637 & 0.9015 \\
CoNet & 0.3048 & 0.2439 & 0.7886 & 0.8394 & 0.2683 & 0.2071 & 0.8726 & 0.9048 \\ 
NCF & 0.3079 & 0.2477 & 0.7859 & 0.8357 & 0.2707 & 0.2120 & 0.8693 & 0.9062 \\ 
NCF-joint & 0.3061 & 0.2449 & 0.7870 & 0.8381 & 0.2691 & 0.2103 & 0.8705 & 0.9074 \\ \hline
\end{tabular}
\caption{Comparison of Cross-Domain Recommendation Performance between Fashion and Art Crafts Datasets.  `*' stands for 95\% significance.}
\label{result6}
\end{table*}

\subsection{Impact of the Number of Overlap Users}
As we have discussed in this paper, one important advantage of the proposed model lies in its ability to significantly reduce the requirement of overlap user information for the cross-domain recommendation tasks. To empirically demonstrate this point, we conduct additional experiments where we include different numbers of overlap users into the recommendation process, and discard the records from other unselected overlap users from the dataset. These included overlap users are randomly selected from the pool of overlap users. Our results in Figure \ref{num} show that the proposed DML model indeed works well with very few overlap users. When we do not include any overlap users into the learning process, the cross-domain recommendation performance is not particularly satisfying. When we start to include a few overlap users into the cross-domain recommendation model, we would observe rapid and significant improvements of recommendation performance. After we add more overlap users into the model, we would witness the phenomenon of ''elbow point'', where additional overlap users only incrementally improve the recommendation performance. In particular, when we only include 8 overlap users and use DML model to provide cross-domain recommendations, we still achieve satisfying recommendation performance and outperform all the baseline models that utilize all the overlap users, though the improvements are not as much as those when we also include all overlap information in the DML model. Eventually when we utilize all the overlap users in the datasets, we achieve the state-of-the-art recommendation performance and outperform all other baselines consistently and significantly. This observation validates our claim and significantly enhances the practicality of our proposed model.

\begin{figure}[h]
    \centering
    \begin{subfigure}[t]{0.23\textwidth}
        \centering
        \includegraphics[width=\textwidth]{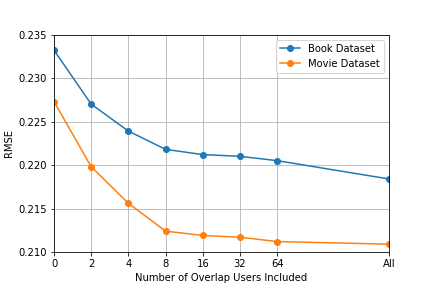}
        \caption{Book-Movie Domains}
    \end{subfigure}%
    ~
    \begin{subfigure}[t]{0.23\textwidth}
        \centering
        \includegraphics[width=\textwidth]{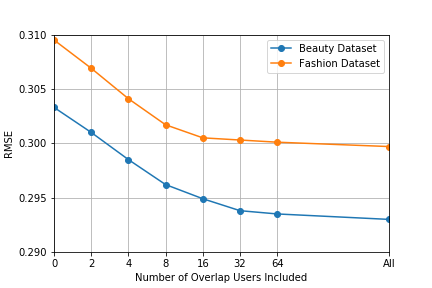}
        \caption{Beauty-Fashion Domains}
    \end{subfigure}%
\caption{Cross-Domain Recommendation Performance versus Number of Overlap Users Included}
\label{num}
\end{figure}

\subsection{Impact of Including/Excluding Explicit Feature Information}
In our experiments, we collect and include explicit feature information of users and items into the recommendation process, and we manage to achieve satisfying recommendation performance. However, in some recommendation scenarios, explicit features might not be available, which could potentially jeopardize the recommendation performance. To test for its impact, we conduct an ablation study in this paper in which we exclude all the explicit features in the learning process, and the latent embeddings are generated directly from the user and item IDs correspondingly. We also exclude explicit features from our selected baselines during their training process for the fair comparison. As we show in Table \ref{feature}, in the case when no feature information is used, our proposed DML model still significantly and consistently outperforms the baseline approaches in terms of all four evaluation metrics, illustrating the superiority and the robustness of our proposed approach. In addition, we observe that the recommendation performance is significantly worse than that in Table \ref{result1}, demonstrating the importance of including explicit feature information for solving cross domain recommendation tasks.

\begin{table*}
\centering
\begin{tabular}{|c|cccc|cccc|} \hline
\multirow{2}{*}{Algorithm} & \multicolumn{4}{c|}{Book Dataset} & \multicolumn{4}{c|}{Movie Dataset} \\ \cline{2-9}
               &  RMSE & MAE & Pre@5 & Rec@5 & RMSE & MAE & Pre@5 & Rec@5 \\ \hline
\textbf{DML} & \textbf{0.2395*} & \textbf{0.1801*} & \textbf{0.8644*} & \textbf{0.9535*} & \textbf{0.2382*} & \textbf{0.1888*} & \textbf{0.8714*} & \textbf{0.9622*} \\
Improved \% & (+6.73\%) & (+9.81\%) & (+1.97\%) & (+3.21\%) & (+5.10\%) & (+5.46\%) & (+1.46\%) & (+3.27\%) \\ \hline
DDTCDR & 0.2568 & 0.1997 & 0.8477 & 0.9238 & 0.2510 & 0.1997 & 0.8589 & 0.9317 \\
CML & 0.2712 & 0.2114 & 0.8011 & 0.8710 & 0.2689 & 0.2115 & 0.8322 & 0.9008 \\ 
CDFM & 0.2791 & 0.2326 & 0.7739 & 0.8499 & 0.2571 & 0.2169 & 0.8180 & 0.8999 \\ 
CMF & 0.3112 & 0.2701 & 0.7739 & 0.8345 & 0.2984 & 0.2500 & 0.8009 & 0.8733 \\ 
CoNet & 0.2611 & 0.2099 & 0.8150 & 0.8866 & 0.2498 & 0.2124 & 0.8374 & 0.9298 \\ 
NCF & 0.2635 & 0.2135 & 0.8293 & 0.8798 & 0.2588 & 0.2121 & 0.8222 & 0.9079 \\ \hline
\end{tabular}
\caption{Comparison of Cross-Domain Recommendation Performance Excluding Feature Information. * stands for 95\% significance.}
\label{feature}
\end{table*}

\subsection{Impact of Different Recommendation Networks for Recommendations}
To validate that the improvement gained in our model is not sensitive to the specific selection of neural recommendation models, we also conduct additional experiments to examine the change of recommendation performance corresponding to different choices of $RS_A$ and $RS_B$, including MLP that we use in this paper, CNN \cite{ying2018graph}, RNN \cite{hidasi2015session}, Neural Attention \cite{zhu2019dan} and Adversarial Network \cite{wang2018neural} to learn user preferences and produce final recommendations. As shown in Table \ref{network}, the cross-domain recommendation performance between book and movie domains is generally consistent across all the different types of autoencoders, albeit some small numerical differences in recommendation performance take place where the neural attention model performs slightly better than other models - due to its capability to capture dynamic user preferences. Nevertheless, the proposed DML model works well in all these experimental settings, and significantly outperforms all the considered cross-domain recommendation baselines.

\begin{table*}
\centering
\begin{tabular}{|c|cccc|cccc|} \hline
\multirow{2}{*}{Recommendation Network} & \multicolumn{4}{c|}{Book Dataset} & \multicolumn{4}{c|}{Movie Dataset} \\ \cline{2-9}
               &  RMSE & MAE & Pre@5 & Rec@5 & RMSE & MAE & Pre@5 & Rec@5 \\ \hline
Multi-Layer Perceptron &  0.2184 & 0.1646 & 0.8826 & 0.9850 & 0.2109 & 0.1606 & 0.9002 & 0.9962 \\ \hline
Convolutional Neural Network &  0.2186 &  0.1648 &  0.8823 & 0.9848 & 0.2107 & 0.1609 & 0.8998 &  0.9962\\ \hline
Recurrent Neural Network & 0.2184  & 0.1648  & 0.8830  & 0.9850 & 0.2111 & 0.1606 & 0.9004 & 0.9952 \\ \hline
Neural Attention & 0.2165 & 0.1632  & 0.8847 & 0.9858 & 0.2097 & 0.1587 & 0.9020 & 0.9980 \\ \hline
Adversarial Network & 0.2192 & 0.1653 & 0.8815 & 0.9848 & 0.2119 & 0.1609 & 0.8998 & 0.9952 \\ \hline
\end{tabular}
\newline
\caption{Comparison of Different Recommendation Networks for Cross-Domain Recommendations in the Imhonet dataset}
\label{network}
\end{table*}

\subsection{Impact of Different Types of AutoEncoder for Feature Embeddings}
To validate that the improvement gained in our model is not sensitive to the specific selection of the autoencoder network, we also conduct additional experiments to examine the change of recommendation performance corresponding to different types of autoencoders, including AE \cite{sutskever2014sequence}, VAE \cite{kingma2013auto}, AAE \cite{makhzani2015adversarial}, WAE \cite{tolstikhin2017wasserstein} and HVAE \cite{davidson2018hyperspherical} to construct the feature embeddings for users and items. As shown in Table \ref{autoencoder}, the cross-domain recommendation performance remains consistent across all the different selection of autoencoder models, where we do not witness any significant differences in any evaluation metrics. This observation also holds true other recommendation tasks in our experiments. Therefore, the choice of particular autoencoder method is not relevant to the cross-domain recommendation performance in our proposed model.

\begin{table*}
\centering
\begin{tabular}{|c|cccc|cccc|} \hline
\multirow{2}{*}{Types of AutoEncoder} & \multicolumn{4}{c|}{Book Dataset} & \multicolumn{4}{c|}{Movie Dataset} \\ \cline{2-9}
               &  RMSE & MAE & Pre@5 & Rec@5 & RMSE & MAE & Pre@5 & Rec@5 \\ \hline
AutoEncoder &  0.2184 & 0.1646 & 0.8826 & 0.9850 & 0.2109 & 0.1606 & 0.9002 & 0.9962 \\ \hline
Variational AutoEncoder &  0.2190 &  0.1654 &  0.8796 & 0.9848 & 0.2107 & 0.1609 & 0.8998 &  0.9948\\ \hline
Adversarial AutoEncoder & 0.2186  & 0.1641  & 0.8805  & 0.9836 & 0.2115 & 0.1599 & 0.9008 & 0.9952 \\ \hline
Wasserstein AutoEncoder & 0.2186 & 0.1649  & 0.8822 & 0.9832 & 0.2119 & 0.1597 & 0.9002 & 0.9952 \\ \hline
Hyperspherical Variational AutoEncoder & 0.2190 & 0.1641 & 0.8816 & 0.9848 & 0.2104 & 0.1606 & 0.8998 & 0.9960 \\ \hline
\end{tabular}
\newline
\caption{Comparison of Autoencoder Settings for Cross-Domain Recommendations in the Imhonet dataset}
\label{autoencoder}
\end{table*}

\subsection{Sensitivity Analysis}
As we theoretically discussed in Section 4.1, the number of overlap users required by the proposed model for producing cross-domain recommendations would be determined by the dimension of latent embeddings. If the size of the embedding is very large, we might still need to obtain a large number of overlap users. Therefore, we conduct additional analysis in this section to explore  sensitivity of the dimension of latent embeddings towards cross domain recommendation performance. In particular, we present the recommendation performance for the Imhonet dataset in Figure \ref{dimension} using the dimension sizes of 4, 8, 16, 32, 64, 128 and 256 respectively to learn the user and item embeddings. We observe that increasing the size of the latent dimensions would indeed improve the recommendation performance in general. However, these improvements are only marginal in their absolute values, and we can achieve reasonable and satisfying cross-domain recommendation using 16 or 32 dimensions of latent vectors. Therefore, our theoretical analysis could be applied to our cross-domain recommendation tasks, and ablation studies in the previous section further confirm that we could provide satisfying recommendation performance using as little as 8 overlap users. To sum up, our proposed model would not be sensitive to the specific selection of the dimension size.

\begin{figure}[h]
    \centering
    \begin{subfigure}[t]{0.23\textwidth}
        \centering
        \includegraphics[width=\textwidth]{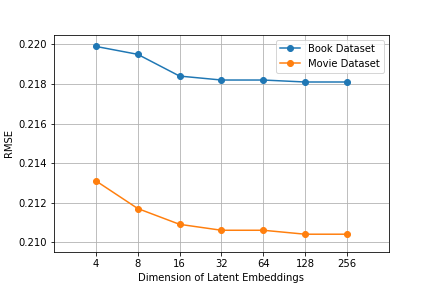}
        \caption{Book-Movie Domains}
    \end{subfigure}%
    ~
    \begin{subfigure}[t]{0.23\textwidth}
        \centering
        \includegraphics[width=\textwidth]{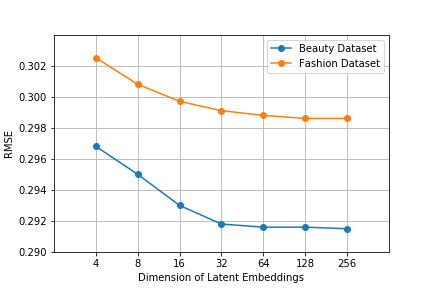}
        \caption{Beauty-Fashion Domains}
    \end{subfigure}%
\caption{Sensitivity Analysis of Different Sizes of Latent Embeddings of the DML model.}
\label{dimension}
\end{figure}

\subsection{Scalability Analysis}
To test for scalability, we provide cross-domain recommendations using DML with aforementioned hyperparameter values for the Imhonet dataset of Book, Movie and Music domains with increasing data sizes from 100 to 1,000,000 records. As shown in Figure \ref{scalability}, we empirically observe that the proposed DML model scales linearly with the increase in the number of data records for the model training process. The training procedure comprises of generating feature embeddings for users and items, as well as the dual metric learning stage jointly optimized across different domains, as we describe in Section 3. The optimization phase is made efficient using batch normalization \cite{ioffe2015batch} and distributed training. As our experiments confirm, DML is capable of providing cross-domain recommendations efficiently and indeed scales well.

\begin{figure}[h]
    \centering
    \begin{subfigure}[t]{0.23\textwidth}
        \centering
        \includegraphics[width=\textwidth]{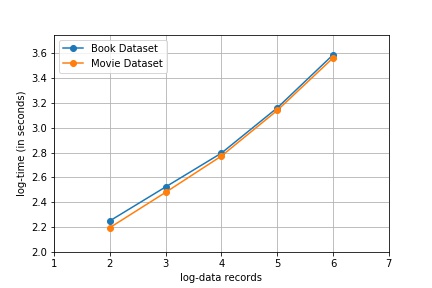}
        \caption{Book-Movie Domains}
    \end{subfigure}%
    ~
    \begin{subfigure}[t]{0.23\textwidth}
        \centering
        \includegraphics[width=\textwidth]{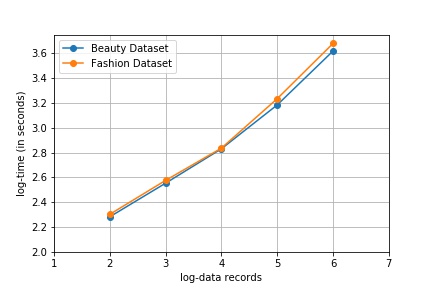}
        \caption{Beauty-Fashion Domains}
    \end{subfigure}%
\caption{Scalability Analysis of the DML model with increasing data sizes from 100 to 1,000,000 records.}
\label{scalability}
\end{figure}

\subsection{Convergence Analysis}
While we have provided the technical foundations of the convergence proposition of the dual learning method under the classical setting in Section 4.2, this proposition might not be directly applicable to our proposed model, as we utilize deep learning techniques for which the optimization process is different. Therefore, we test the convergence empirically in our study, and conjecture that it should still happen even though our method does not explicitly satisfy the condition. In particular, we train the model iteratively for 100 epochs until the change of loss function is less than 1e-5. We plot the training loss over time for our cross-domain recommendation tasks in Figure \ref{epoch}. The key observation is that, DML starts with relatively higher loss due to the noisy initialization. As times goes by, DML manages to stabilize quickly and significantly achieves satisfying recommendation performance only after 10 epochs. Therefore, we empirically validate that the proposed DML indeed converges in our experimental settings.

\begin{figure}[h]
    \centering
    \begin{subfigure}[t]{0.23\textwidth}
        \centering
        \includegraphics[width=\textwidth]{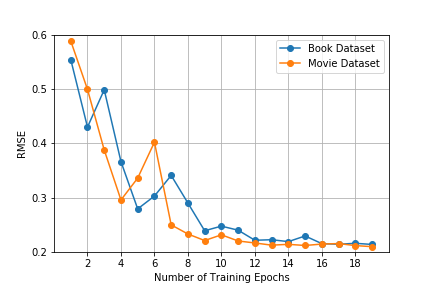}
        \caption{Book-Movie Domains}
    \end{subfigure}%
    ~
    \begin{subfigure}[t]{0.23\textwidth}
        \centering
        \includegraphics[width=\textwidth]{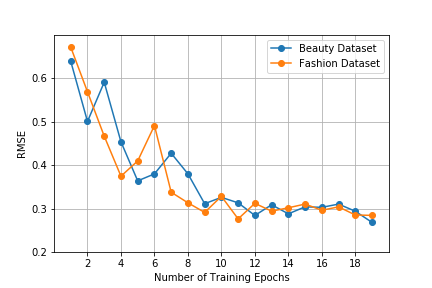}
        \caption{Beauty-Fashion Domains}
    \end{subfigure}%
\caption{Convergence Analysis of the DML model with Different Training Epochs.}
\label{epoch}
\end{figure}

\section{Conclusion}
In this paper, we propose a novel dual metric learning based model DML that significantly improves recommendation performance across different domains and reduces the requirement of overlapping user information. We accomplish this by proposing a novel latent orthogonal metric mapping and iteratively going through the learning loop until the models stabilize for both domains. We also prove that this convergence is guaranteed under certain conditions and empirically validate the hypothesis for our model across different experimental settings. We show that the proposed model could be extended for multi-domain tasks.

Note that the proposed approach provides several benefits, including that it (a) transfers information about latent interactions instead of explicit features from the source domain to the target domain; (b) utilizes the dual learning mechanism to enable the bidirectional training that improves performance measures for both domains simultaneously; (c) learns the latent orthogonal mapping function across two domains, that (i) preserves similarity of user preferences and thus enables proper transfer learning and (ii) computes the inverse mapping function efficiently; (d) works well with very few overlap users (as low as 8 in our experiments) and obtains the results comparable to the state-of-the-art baselines. This makes our approach more practical vis-a-vis the previously proposed cross-domain recommendation models because many business applications have very few overlapping users.

As the future work, we plan to further study the application of metric learning to provide even better cross-domain recommendations. In addition, we plan to extend the convergence proposition to more general settings.

\bibliographystyle{IEEEtran}
\bibliography{sigproc}

\begin{IEEEbiography}[{\includegraphics[width=1in,height=1.25in,clip,keepaspectratio]{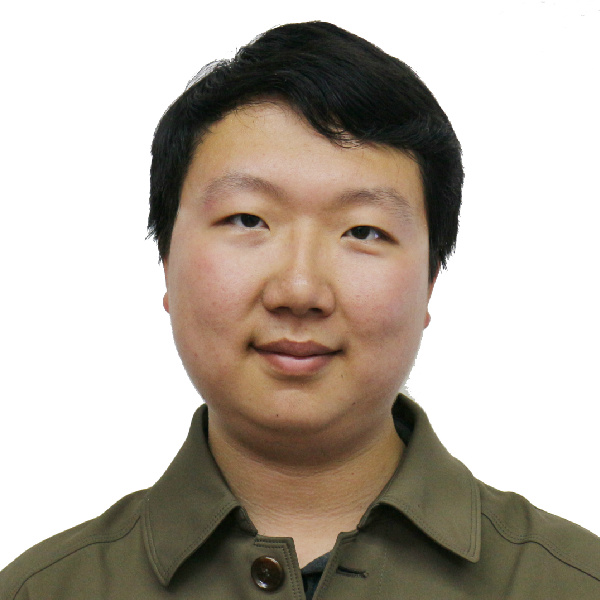}}]{Pan Li}
Pan Li is a PhD student in the department of Technology, Operation and Statistics, Stern School of Business, New York University. His research interests include recommender systems and business analytics.  He has published papers in international conferences and journals including WSDM, EMNLP, RecSys, AAAI, IEEE TKDE, ACM TIST and ACM TMIS.
\end{IEEEbiography}

\begin{IEEEbiography}[{\includegraphics[width=1in,height=1.25in,clip,keepaspectratio]{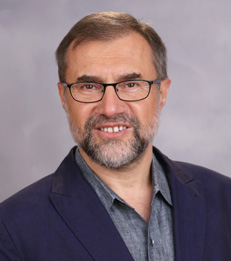}}]{Alexander Tuzhilin}
Alexander Tuzhilin is the Leonard N. Stern Professor of Business in the Department of Technology, Operations and Statistics at the Stern School of Business, NYU. His research interests include personalization, recommender systems, machine learning and AI, where he has produced over 150 research publications. He has served on the organizing committees of numerous conferences, including as the Program and the General Chair of ICDM, and as the Program and the Conference Chair of RecSys. He has also served on the editorial boards of several journals, including as the Editor-in-Chief of the ACM TMIS.
\end{IEEEbiography}

% that's all folks
\end{document}